\newtheorem{lemma}{Lemma}
\newcommand{\be}{\begin{equation}}
\newcommand{\ee}{\end{equation}}
\newcommand{\bea}{\begin{eqnarray}}
\newcommand{\eea}{\end{eqnarray}}
\newcommand{\bm}{\mathbf}
\newcommand{\by}{{\bm y}}
\newcommand{\bbf}{{\bm f}}
\newcommand{\bAn}{\mbox{\boldmath $\cal A$}}
\newcommand{\bLn}{\mbox{\boldmath $\cal L$}}
\newcommand{\bBn}{\mbox{\boldmath $\cal B$}}
\newcommand{\bCn}{\mbox{\boldmath $\cal C$}}
\newcommand{\bD}{{\bf D}}
\newcommand{\bS}{{\bf S}}
\newcommand{\bU}{{\bf U}}
\newcommand{\bH}{{\bm H}}
\newcommand{\bI}{{\bm I }}
\newcommand{\bJ}{{\bm J }}
\newcommand{\bLambda}{\mbox{\boldmath$\Lambda$}}
\newcommand{\bPsi}{\mbox{\boldmath{$\Psi$}}}
\begin{document}

\title{Spectral Domain Spline Graph Filter Bank}

\author{Amir Miraki, Hamid Saeedi-Sourck, Nicola Marchetti, and Arman Farhang
\thanks{
A. Miraki and H. Saeedi-Sourck are with EE Department, Yazd University, Yazd, Iran, 89158-18411 (e-mail: amir.miraki@stu.yazd.ac.ir, saeedi@yazd.ac.ir). N. Marchetti is with Trinity College Dublin, Ireland (e-mail: nicola.marchetti@tcd.ie)
A. Farhang is with Maynooth University, Maynooth, Ireland (e-mail: arman.farhang@mu.ie).}}         
\maketitle

\begin{abstract}
In this paper, we present a structure for two-channel spline graph filter bank with spectral sampling (SGFBSS) on arbitrary undirected graphs. Our proposed structure has many desirable properties; namely, perfect reconstruction, critical sampling in spectral domain, flexibility in choice of shape and cut-off frequency of the filters, and low complexity implementation of the synthesis section, thanks to our closed-form derivation of the synthesis filter and its sparse structure. These properties play a pivotal role in multi-scale transforms of graph signals. Additionally, this framework can use both normalized and non-normalized Laplacian of any undirected graph. We evaluate the performance of our proposed SGFBSS structure in nonlinear approximation and denoising applications through simulations. We also compare our method with the existing graph filter bank structures and show its superior performance.
\end{abstract}

\begin{IEEEkeywords}
Graph signal processing, Spline graph filter bank, Spectral sampling.
\end{IEEEkeywords}

\section{Introduction}
\IEEEPARstart{G}{raph} signal processing (GSP) extends classical signal processing to enable analysis of irregularly structured data on the vertices of an underlying graph, \cite{Shuman, GSP}. In recent years, GSP has been utilized in plethora of real-life applications such as data processing in social, transport, economic, biological and sensor networks  \cite{GSP}. High dimensional nature of data in these networks necessitates multirate signal analysis by construction of filter banks on graphs for different purposes such as denoising, compression, and  data classification, \cite{LSGFF}.

Graph filter bank (GFB) was first proposed for special types of graphs, namely, tree,
${\Omega}$-structure, circulant, and bipartite graphs \cite{tree,omega-struc,Ekambaram4,ortega1}. The authors in \cite{ortega1} proposed a two-channel critically-sampled GFB structure with quadrature mirror filters (GraphQMF) satisfying perfect reconstruction (PR) property for signals on bipartite graphs. This method is applicable to any arbitrary graph through a bipartite subgraph decomposition leading to a high computational complexity. Alternative filter design methods for the GFB structure in \cite{ortega1} with biorthogonal and frequency
conversion based filters were proposed in \cite{ortega2} and \cite{GraphFC}, known as GraphBior and GraphFC, respectively. An $M$-channel oversampled extension of \cite{ortega1} was presented in \cite{sakyama1}. In a more recent work, \cite{Pavez}, the results of \cite{ortega1} are extended to arbitrary graphs, without the need for bipartite subgraph decomposition, using a different definition of graph Fourier transform (GFT). The authors in \cite{subgraph} decomposed an arbitrary graph into several subgraphs. They applied local GFT to each subgraph and obtained a GFB with PR property (SubGFB). An  $M$-channel critically sampled GFB (CSFB) on arbitrary graphs was introduced in \cite{shuman1}, where the synthesis filters in each subband were replaced with interpolation operators. Authors in \cite{Anis} proposed a critical sampling method for two-channel filter bank on an arbitrary graph where the PR condition was only satisfied for bipartite graphs. Another GFB for arbitrary graphs is spline graph filter bank (SGFB) \cite{Ekambaram1}. The key difference between SGFB and other GFB structures is in substitution of synthesis filters with an inverse filter, which simplifies the GFB design, see Fig.~\ref{fig1}.

Unlike classical filter banks, down/upsampled signal in GFB has a considerably different spectrum from that of the original signal, except for bipartite graphs \cite{Tanaka1,Tanaka5}. This is a big challenge for multiscale analysis and processing on arbitrary graphs. To deal with this challenge, different approaches have emerged, \cite{jiang1,jiang2,Tanaka3,Tanaka4}. The authors in \cite{jiang1} and \cite{jiang2} proposed a GFB structure without down/upsampling that leads to a large computational load. In contrast, the  authors in \cite{Tanaka3} and \cite{Tanaka4}, take a more interesting approach and perform down/upsampling operations in spectral domain. This idea led to a critically-sampled GFB structure with spectral sampling (GraphSS) that is applicable to arbitrary graphs while satisfying the PR condition. The concept of spectral sampling, its superior performance to vertex domain methods, \cite{Tanaka1}, and the results of \cite{Tanaka3} and \cite{Tanaka4} are among the main motivations for extending SGFB, \cite{Ekambaram1}, from vertex domain to spectral domain in this paper. In \cite{Ekambaram1}, the designed analysis filters do not have desirable passband/stop-band characteristics. Hence, a filter design method, known as modified SGFB (MSGFB), was proposed in \cite{Amir1}. However, SGFB has a number of limitations; namely, spectral issues resulting from down/upsampling for arbitrary graphs, deteriorated performance compared to spectral sampling-based GFBs, and a high computational complexity due to the dense matrix inversion and multiplication operations at the synthesis section. Thus, the main goal of this paper is to address all these limitations.

In this paper, we present a two-channel critically-sampled SGFB structure with spectral sampling (SGFBSS), see Fig.~\ref{fig2}. Our proposed SGFBSS structure satisfies PR condition for arbitrary graphs without the requirement of any subgraph partitioning or decomposition. Since down/upsampling operations are performed in the spectral domain, they do not lead to any spectral issues. Our proposed structure can use both normalized and non-normalized graph Laplacian. We find a closed-form for the inverse filter at the synthesis side which is sparse and hence it leads to a low complexity implementation. We also discuss filter design methodology and show that our proposal provides a large amount of flexibility in the choice of filter parameters such as their shape and cut-off frequency. Our numerical results demonstrate the effectiveness of the proposed SGFBSS structure for applications such as nonlinear approximation and denoising on arbitrary undirected graphs. We compare our method with the existing GFB structures in the literature and show its superior performance.

\textit{Notations and preliminaries:} Boldface uppercase, boldface lowercase and normal letters represent matrices, vectors and scalar quantities, respectively. The superscript $(\cdot)^{\rm T}$ denotes transpose operation. A graph ${\mathcal G}=(\mathcal V,\mathcal E)$ is defined with a set of nodes $\mathcal V$, a set of edges $\mathcal E$ and an adjacency matrix ${\mathbf A}$ that describes the graph connectivity. ${\mathbf D}$ is the diagonal degree matrix whose diagonal elements are defined as the sum of the elements on the respective row of ${\mathbf A}$. In this paper, we consider undirected graphs without self-loops, i.e., the elements on the main diagonal of ${\mathbf A}$ are all zero. The graph Laplacian matrix is defined as $\mathbf L \equiv  \bD - \mathbf A$. Normalized adjacency and Laplacian matrices are defined as $\bAn\equiv \bD^{{\rm  - 1/2}} \mathbf A\bD^{{\rm  - 1/2}}$ and $\bLn \equiv  {\mathbf D}^{{\mathbf  - 1/2}} {\mathbf L\mathbf D}^{{\rm  - 1/2}}=\bI_N  - \bAn$, respectively, where $\bI_N$ is the $N\times N$ identity matrix. For connected graphs, $\bLn$ is a real-valued symmetric matrix. Thus, using eigenvalue decomposition, it can be written as $\bLn=\mathbf U \bLambda \mathbf U^{\rm T}$, where $\bLambda={\rm diag}\big([\lambda_0,\ldots,\lambda_{N-1}]^{\rm T}\big)$ is the diagonal eigenvalue matrix with the diagonal elements $0=\lambda_0<\lambda_1\leq \cdots \leq \lambda_{N-1}\leq 2$, $\mathbf U =\left[ {\boldsymbol{\rm u}_{\rm 0}, \ldots, \boldsymbol{\rm u}_{N - 1}} \right]$ is a unitary matrix that contains the orthonormal eigenvectors $\boldsymbol{\rm u}_{n}$ on its columns and ${\mathbf U\mathbf U}^{{\rm T}}=\bI_N$. Considering the signal $\mathbf f = \left[ {f(0), \cdots, f(N-1)}\right]^{\rm T}$ where the sample $f(n)$ appears on the $n$th node of the graph, GFT of this signal is defined as $\bar {\mathbf f}\triangleq{\mathbf U}^{\rm T}{ \mathbf f}$. Equivalently, inverse GFT of $\bar{\mathbf f}$ can be obtained as ${\mathbf f = \mathbf U\bar{\mathbf f}}$. Finally, the filtered signal is expressed as $\tilde {\bbf} = \bH \bbf$, where $\bH = \bU\,\overline{ \bH} \bU^{\rm T} $ and the filter kernel $\overline {\bH}$  is a diagonal matrix with the elements $\bar{H}(n)$ on its main diagonal, i.e., $\overline{\bH}={\rm diag}\big([\bar{H}(0),\ldots,\bar{H}(N-1)]^{\rm T}\big)$.

\section{two-channel SGFB with vertex  sampling}\label{Sec:III}
\begin{figure}[t]
\centering \includegraphics[scale=.5]{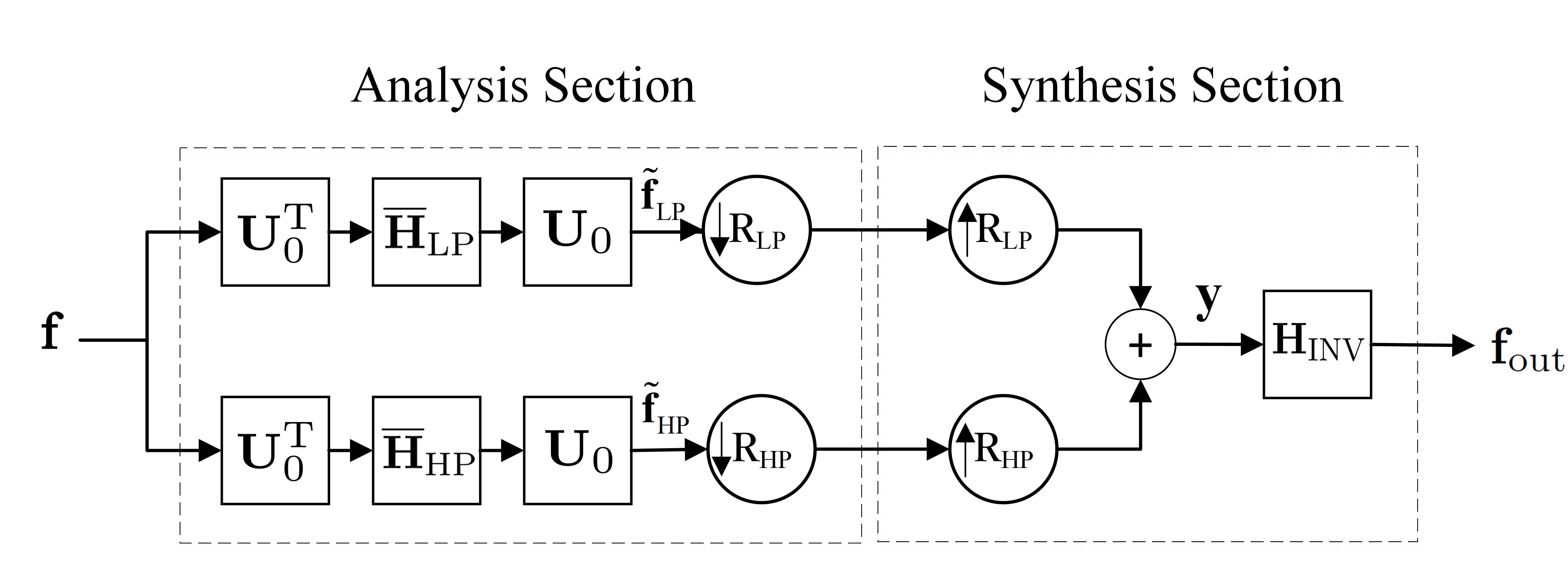}
  \caption{Two-channel SGFB with vertex  sampling \cite{Ekambaram1}.}
  \label{fig1}
\end{figure}

Figure~\ref{fig1} shows a two-channel SGFB with vertex domain sampling (SGFBVS) where the subscripts LP and HP refer to the low-pass and high-pass channels of the filter bank, respectively. Based on the results of \cite{Ekambaram1}, the filter $\bH_{\rm LP}$ can be obtained as a polynomial function of the normalized adjacency matrix with the order $J$. Thus, $\bH_{\rm LP}  =\frac{1}{2}(\bI_{\rm N}  + \bBn)$ where $\bBn=\sum_{ l  =  1}^J {w_l (\bAn )^l }$ and the weights $\{w_l\}_{l=1}^{J}$ are optimized to achieve a desired filter response. Also, $\bH_{\rm LP}$ can be diagonalized as $\bH_{\rm LP}= \bU_0\overline{\bH}_{\rm LP}\bU_0^{\rm T}$ where 
 \be 
 \label{filter-eq}
\overline{\bH}_{\rm LP}=\frac{1}{2}( \bI_N+\bPsi),
 \ee
$\bPsi=\sum_{l= 1}^J w_l(\mathbf {I}_N-\bLambda )^l={\rm diag}\{[\psi_0,\ldots,\psi_{N-1}]^{\rm T}\}$ with $\psi_n= \sum_{l{\rm   =  1}}^J {w_l (1-{\rm \lambda }_{n})^l }$ and ${\bar{ H}}_{\rm LP}(n)=\frac{1}{2}(1+\psi_n)$.

Additionally, $\bU_{0}$ and $\bLambda$ are the eigenvector and eigenvalue matrices of the normalized Laplacian matrix for the original graph, respectively. The high-pass filter $\bH_{\rm HP}$ can be constructed as $\bH_{\rm HP} =\bI_{\rm N}-\bH_{\rm LP}$ \cite{Ekambaram1}. After filtering, the signals $\tilde{\bbf}_{\rm LP}$ and $\tilde{\bbf}_{\rm HP}$ in low-pass and high-pass channels are downsampled by the factors $ {\rm R}_{\rm LP}$ and $ {\rm R}_{\rm HP}$, respectively. Therefore, the corresponding graph is reduced. To reduce the graph size, in this paper, we use the well-known Kron reduction method \cite{Kron}. In the synthesis section, the upsampled signals are combined and the signal $\by =\frac{1}{2} (\bI_{N}+\mathbf K\bBn)\bbf$ in the vertex domain is formed where $\mathbf {K}$ is a diagonal matrix with the diagonal elements $K(i,i) = 1$ if node $i$ is maintained after downsampling at LP channel, otherwise $K(i,i) = -1$. Finally, under the condition that $(\bI_{N}  + \mathbf K\bBn)$ is invertible, the original signal $\bbf$ is perfectly reconstructed as ${\bbf}_{\rm out} =\mathbf {H}_{\rm INV}\by$ where $\mathbf {H}_{\rm INV} =2(\bI_{N}  + \mathbf K\bBn)^{-1}$ \cite{Ekambaram1}. Hence, the weights, $w_l$, need to be designed to guarantee the invertibility of $(\bI_{N}  + \mathbf K\bBn)$.

\section{Two-channel SGFB with spectral  sampling}\label{Sec:IV}
\begin{figure}[t]
\centering \includegraphics[scale=.50]{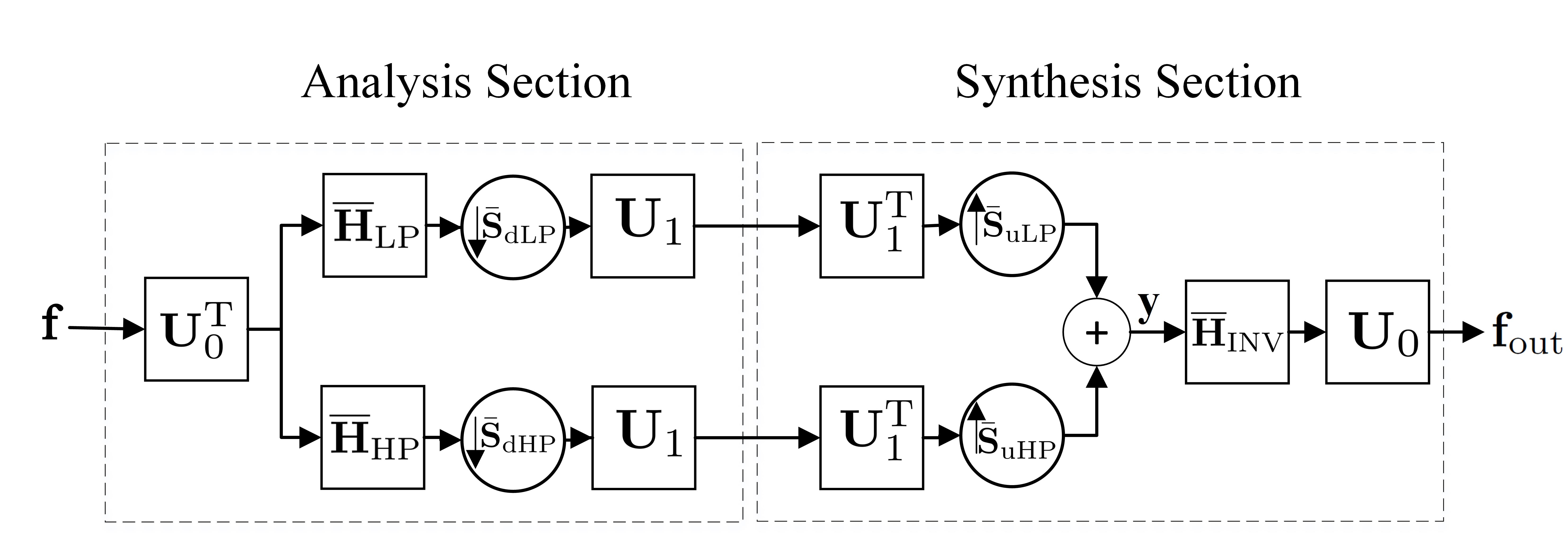}
  \caption{Proposed two-channel SGFB with spectral sampling.}
  \label{fig2}
\end{figure}
In this section, we propose an architecture for the two-channel SGFB based on the spectral sampling concept which was first introduced in \cite{Tanaka1}, see Fig.~\ref{fig2}. We derive PR conditions, present a filter design method and derive a closed-form for the inverse synthesis filter leading to a low complexity implementation.

Let us consider the downsampling matrices as
\begin{equation}
\begin{array}{l}
 {\bar{ \bS}}_{\rm dLP}  = \left[\bI_{N/2}\quad  \bJ_{N/2}  \right],
 ~{\bar{ \bS}}_{\rm dHP} = \left[ {\bI_{N/2} \quad  -\bJ_{N/2} } \right],
 \end{array}\label{sampl_matrix}
\end{equation}
where $\bJ_{N/2}$ is the counter identity matrix of size $N/2$, \cite{Tanaka1,Tanaka3} and upsampling matrices as ${\bar{ \bS}}_{\rm uLP}  = {\bar{ \bS}}_{\rm dLP}^{\rm T}$ and ${\bar{ \bS}}_{\rm uHP}  = {\bar{ \bS}}_{\rm dHP}^{\rm T}$.  Hence, from Fig.~\ref{fig2}, the spectral domain signal $\by$ can be obtained as
\be
\mathbf y\!=\!\left({\bar \bS}_{\rm uLP}\bU_1^{\rm{T}}\bU_1{\bar \bS}_{\rm dLP}{\overline\bH}_{\rm{ LP}}+{\bar \bS}_{\rm uHP}\bU_1^{\rm{T}}\bU_1{\bar \bS}_{\rm dHP}{\overline\bH}_{\rm{HP}}\right)\!\bU_0^{\rm{T}}\mathbf{f},\label{recons_eq}
\ee
where $\bU_0$ and $\bU_1$ are the unitary eigenvector matrices corresponding to the original and the reduced-size graphs, respectively \cite{Tanaka1}. Since, $\bU_1^{\rm T}\bU_1=\bI_{N/2}$, (\ref{recons_eq}) reduces to
\be\label{eq_3}
\mathbf y=\left({\bar \bS}_{\rm uLP}{\bar \bS}_{\rm dLP}{\overline\bH}_{\rm{ LP}}+{\bar \bS}_{\rm uHP}{\bar \bS}_{\rm dHP}{\overline\bH}_{\rm{HP}}\right)\!\bU_0^{\rm{T}}\mathbf{f}.
\ee
By substituting (\ref{filter-eq}) and (\ref{sampl_matrix}) into (\ref{eq_3}), we have
 \begin{align}\label{eqn:y_SS}
\mathbf y =\bCn\bU_0^{\rm{T}}\mathbf{f},
\end{align}
where the square matrix $\bCn=\bI_N +\bJ_N\bPsi$ is non-zero only on its main and anti-diagonal elements, i. e.,
\be \label{eqn:bCn}
\bCn =
 \begin{bmatrix}
  1     & 0 & \cdots & 0 &  \psi_{N-1}   \\
  0      & 1 & \cdots & \psi_{N-2} & 0  \\
  \vdots & \vdots &  \ddots  & \vdots & \vdots   \\
  0      & \psi_1  & \cdots  &1 & 0 \\
   \psi_0 & 0   & \cdots       &0 & 1
 \end{bmatrix}.
\ee

From Fig.~\ref{fig2} and using (\ref{eqn:y_SS}), the output signal of the synthesis section can be represented as
\be\label{eq-esbat1}
\bbf_{\rm out}=\bU_0{\overline \bH}_{\rm INV}\by=\bU_0{\overline \bH}_{\rm INV}\bCn\bU_0^{\rm{T}}\mathbf{f}.
\ee
From this equation, one may realize that opposed to the inverse filter in Fig.~\ref{fig1} that works in the vertex domain, the inverse filter in our proposed architecture operates in spectral domain. Based on (\ref{eq-esbat1}), the original signal $\bbf$ can be perfectly reconstructed when ${\overline \bH}_{\rm INV}=\bCn^{-1}$. Hence, the two-channel SGFBSS has PR property under the condition that $\bCn$ is invertible. The  square matrix $\bCn$ is invertible if and only if its rows  are linearly independent.

\begin{lemma} \label{Lemma:1}
The  square matrix $\bCn$ is invertible and the original signal $\bbf$ is perfectly reconstructed using (\ref{eq-esbat1}), if and only if
${\psi_n}\ne\frac{1}{\psi_{N-n-1}}, \forall~n\in\{0,\cdots,{N}/{2}-1\}$.
\end{lemma}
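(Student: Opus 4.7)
The plan is to exploit the very sparse ``X-pattern'' of $\bCn$ in~(\ref{eqn:bCn}) and reduce the invertibility question to a product of $2\times 2$ determinants.

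First, I would read off from $\bCn=\bI_N+\bJ_N\bPsi$ that the only nonzero entries of $\bCn$ sit at positions $(n,n)$ with value $1$ and $(n,N{-}1{-}n)$ with value $\psi_{N-1-n}$. Therefore, for any index $n\in\{0,\ldots,N/2-1\}$, rows $n$ and $N{-}1{-}n$ and columns $n$ and $N{-}1{-}n$ interact only among themselves: all their nonzero entries lie inside the $2\times 2$ principal submatrix
\begin{equation*}
\bCn(n)=\begin{bmatrix} 1 & \psi_{N-1-n}\\ \psi_n & 1\end{bmatrix}.
\end{equation*}

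Second, I would apply a symmetric permutation $\bP\bCn\bP^{\rm T}$ that reorders the indices as $(0,N{-}1,1,N{-}2,\ldots,N/2{-}1,N/2)$. Because the interactions just identified are pairwise and disjoint, the permuted matrix is block-diagonal with blocks $\bCn(0),\bCn(1),\ldots,\bCn(N/2-1)$. Since $\bP$ is orthogonal, $\bCn$ is invertible if and only if this block-diagonal form is, and its determinant factors as
\begin{equation*}
\det(\bCn)=\prod_{n=0}^{N/2-1}\det\bCn(n)=\prod_{n=0}^{N/2-1}\bigl(1-\psi_n\psi_{N-1-n}\bigr).
\end{equation*}

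Third, this product is nonzero if and only if $\psi_n\psi_{N-1-n}\neq 1$ for every $n\in\{0,\ldots,N/2-1\}$, which is the stated condition $\psi_n\neq 1/\psi_{N-n-1}$ (with the convention that the condition is automatically satisfied when $\psi_{N-n-1}=0$). Finally, I would close the loop with the reconstruction identity (\ref{eq-esbat1}): when $\bCn$ is invertible, choosing $\overline{\bH}_{\rm INV}=\bCn^{-1}$ yields $\bbf_{\rm out}=\bbf$; conversely, if the invertibility condition fails for some $n$, no diagonal $\overline{\bH}_{\rm INV}$ can undo the singular $2\times 2$ coupling, so perfect reconstruction fails.

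There is no serious obstacle. The only care required is the bookkeeping of the permutation that exposes the block-diagonal structure and the edge case $\psi_{N-1-n}=0$ in reading the fraction $1/\psi_{N-n-1}$; handling both cleanly by working with the equivalent condition $\psi_n\psi_{N-1-n}\neq 1$ keeps the argument transparent.
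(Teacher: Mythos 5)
Your proposal is correct and follows essentially the same route as the paper: both exploit the pairwise coupling of indices $n$ and $N-1-n$ in $\bCn$, the paper by checking linear independence of each pair of symmetric rows and you by permuting to a block-diagonal form and factoring the determinant into $\prod_{n}\bigl(1-\psi_n\psi_{N-n-1}\bigr)$. Your determinant formulation is marginally cleaner --- it handles the $\psi_{N-n-1}=0$ edge case explicitly and directly anticipates the closed-form inverse $\bCn^{-1}={\widetilde \bPsi}(\bI_N -\bJ_N\bPsi)$ that the paper derives later in Section~\ref{LCI} --- but the underlying idea is identical.
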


\begin{proof}
The special structure of the matrix $\bCn$ that is shown in (\ref{eqn:bCn}), suggests that this matrix always has $\frac{N}{2}$ independent rows. This is because $\bCn=[{\bf c }_0^{\rm T},\ldots,{\bf c }_{N-1}^{\rm T}]^{\rm T}$ is always comprised of $\frac{N}{2}$ pairs of symmetrical rows, ${\bf c }_n$ and ${\bf c }_{N-n-1}$ with non-zero entries on the same columns. Hence, this matrix has $N$ independent rows if and only if  each pair of symmetrical rows are linearly independent, i.e.,
\be\label{Ind1}
\nexists\;\alpha,\quad {\bf c }_{n}=\alpha{\bf c }_{N-n-1},\quad n=0,\cdots,N/2-1,
\ee
where $\alpha$ is a scalar \cite{Matrix_analysis}.
Let us assume there exists an $\alpha$ so that ${\bf c }_{n}=\alpha{\bf c }_{N-n-1}$. From (\ref{eqn:bCn}) and using (\ref{Ind1}), one may realize that $\psi_{n}=\frac{1}{\alpha}$ and $\psi_{N - n -1}=\alpha$. Consequently, the condition in (\ref{Ind1}) is satisfied by ${\psi _n}\ne\frac{1}{\psi _{N-n-1}}$.
\end{proof}

\subsection{Filter design methodology}
The shape of the filters plays a crucial role for signal decomposition into different spectral bands. In this section, a pair of analysis filters with desirable frequency responses is proposed for SGFBSS. We consider the spectral kernel as
\begin{equation}
 {\bar H}_{\rm LP}(n)=\begin{cases}
    1, & \text{if $\lambda _n \le \lambda_{\rm cut}$},\\
    \varepsilon, & \text{if $\lambda _n  > \lambda_{\rm cut}$}.
\end{cases}
\end{equation}
For $\lambda_{\rm cut}  = \lambda _{\frac{N}{2}-1}$ and $\varepsilon=0$, we have the exact ideal low-pass filter. Using (\ref{filter-eq}), we have ${\psi _n}=2{\bar H}_{\rm LP} (n)-1, n=0,\cdots,N-1$. Hence, for exact ideal filter, $\psi_0=\ldots=\psi_{\frac{N}{2}-1}=1$ and $\psi_{\frac{N}{2}}=\ldots=\psi_{N-1}=-1$. For a higher flexibility, cut-off frequency can be variable within the range ${\lambda _0  < \lambda_{\rm cut}  \le \lambda _{\frac{N}{2}-1} }$ and then $\varepsilon\neq 0$ to satisfy PR condition, as mentioned earlier.

Non-ideal filters are sometimes preferred when the eigenvalue distribution of the variation operator is irregular \cite{Tanaka3}. In this paper, we assume a Butterworth filter with order ${\beta}$ as ${\bar H}_{\rm LP}(n) = (1+({\lambda_n}/{\lambda_{\rm cut} })^{2{\beta}} )^{-0.5}$. The cut-off frequency does not have any limitations in our proposed SGFBSS structure. Thus, $\lambda_{\rm cut}$ and ${\beta}$ are the design parameters.
The selection of cut-off frequency for analysing filters in GFBs has received less attention in the literature,\cite{ortega1, ortega2, GraphFC, Tanaka3}. In GFBs for bipartite graphs, $\lambda_{\rm cut}=\lambda_{\frac{N}{2}-1}$ is assumed. This is reasonable due to the order of eigenvalues for bipartite graphs. However, the shapes of the filters for arbitrary graphs are important and require more flexibility for cut-off frequency which is satisfied by using our method.

\subsection{Low-complexity implementation}\label{LCI}
As it was mentioned in the proof of Lemma \ref{Lemma:1}, $\bCn$ contains $\frac{N}{2}$ pairs of symmetrical rows, ${\bf c}_n$ and ${\bf c }_{N-n-1}$ with non-zero elements only on two similar columns. As a result, the linear system of equations in (\ref{eqn:y_SS}) that is defined by the coefficient matrix $\bCn$, can be broken into $\frac{N}{2}$ isolated linear systems of equations with only two unknowns in each. Consequently, $\bCn^{-1}$ can be easily obtained by inverting $\frac{N}{2}$ matrices of size $2\times 2$ each. Hence, $\bCn^{-1}={\widetilde \bPsi}(\bI_N -\bJ_N\bPsi)$ where ${\widetilde \bPsi}={\rm diag}\{[\widetilde\Psi(0),\ldots,\widetilde\Psi(N-1)\}$ is a diagonal matrix with diagonal elements $\widetilde\Psi(n)=\widetilde\Psi(N-n-1)=1/(1-\psi_n\psi_{N-n-1})$ for $n=0,\ldots,N/2-1$ that are reciprocals of the determinants of the corresponding $2\times 2$ matrices. This simple closed-form for $\bCn^{-1}$, significantly reduces the computational complexity of the matrix inversion especially for large graphs.

To compare the complexity of our proposed SGFBSS with other existing solutions  in \cite{shuman1,Ekambaram1,Tanaka3,Amir1}, we focus on filtering and sampling. Both the spectral approaches of SGFBSS and the method in \cite{Tanaka3} have the same complexity for filtering in the analysis section and sampling. Interestingly, both GFBs have the same complexity in the  synthesis section. In particular, spectral domain filtering in \cite{Tanaka3} requires $2\mathcal{O}(N)$ number of multiplications. To pass the signal $\by$ through the inverse filter ${\overline \bH}_{\rm INV}$, our proposed SGFBSS method requires $2\mathcal{O}(N)$ rather than $\mathcal{O}(N^2)$ multiplications as compared with the SGFBVS method in \cite{Ekambaram1} and \cite{Amir1}. However, the class of vertex sampling methods such as the ones in \cite{shuman1,Ekambaram1} and \cite{Amir1} require lower complexity than methods with spectral sampling such as the one proposed in this paper and \cite{Tanaka3}. This is the cost to pay for the better performance of the spectral sampling based methods.

\begin{figure}
  \centering
  \subfigure[Sensor graph ($N$=100)]{\includegraphics[scale=0.14]{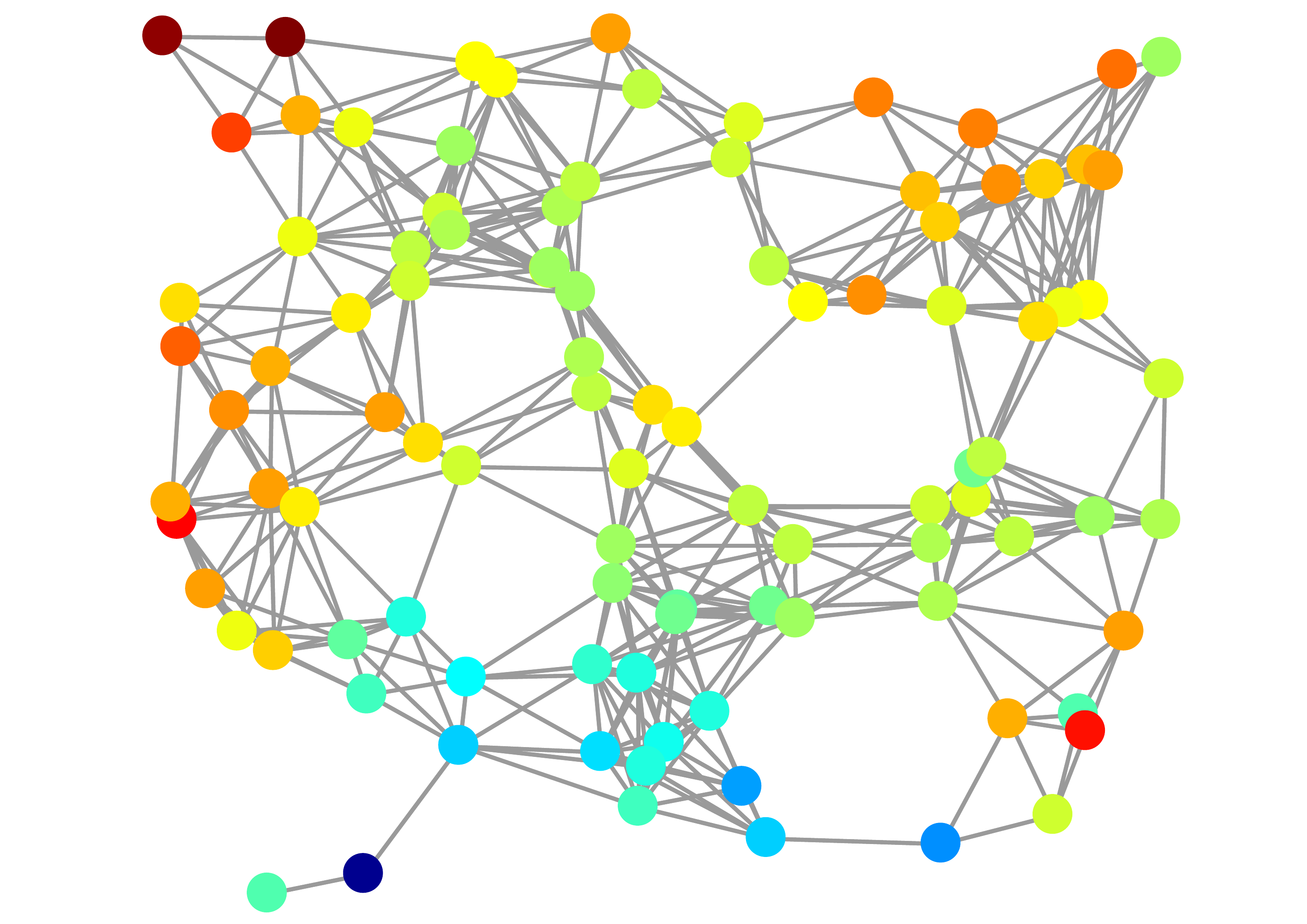}\label{fig2:sensor}}
\subfigure[Community graph ($N$=400)]{\includegraphics[scale=0.14]{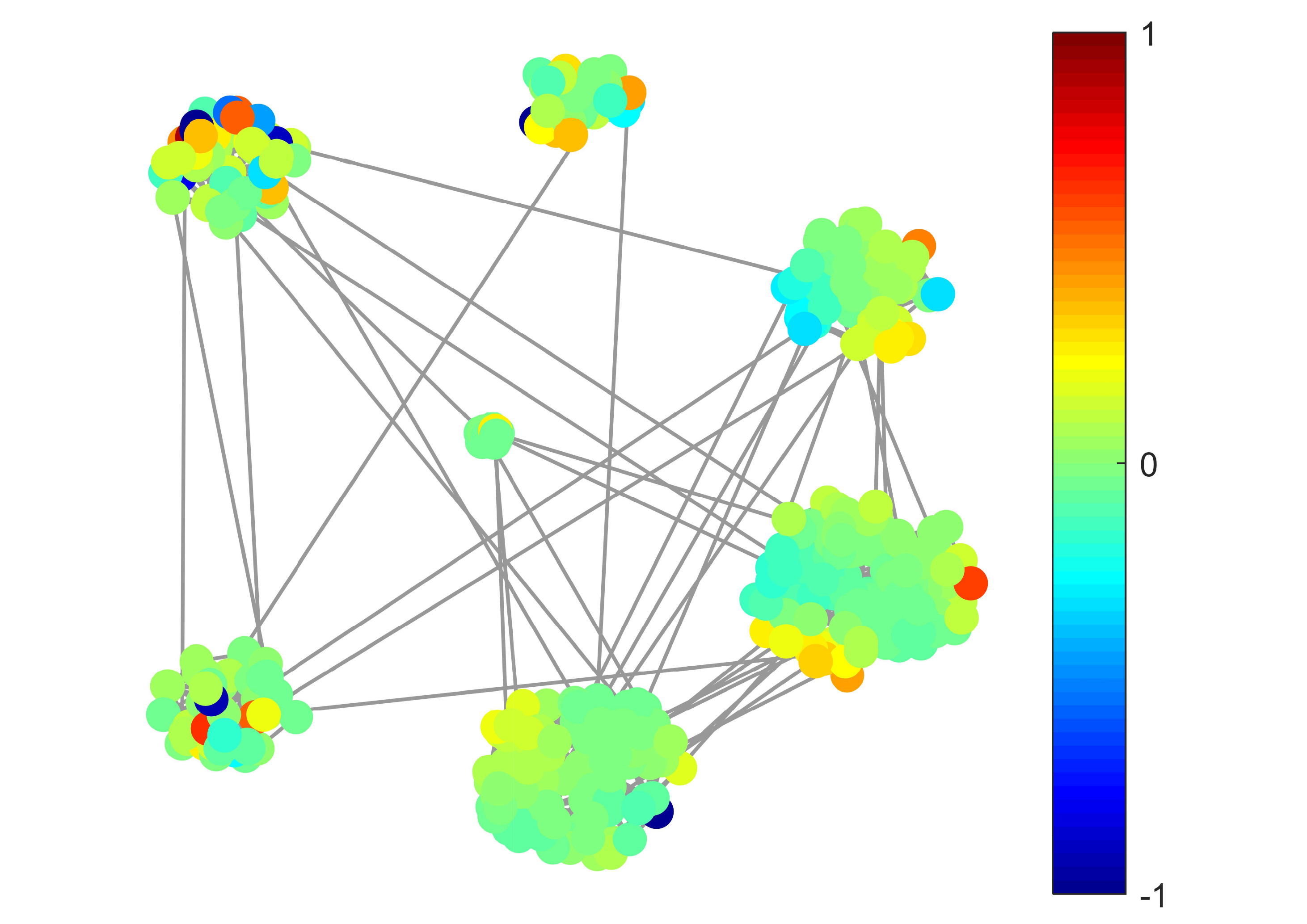}\label{fig2:community}}
  \caption{Graph signal in vertex domain.}
  \label{fig3}
\end{figure}
\begin{figure}
  \centering
   \hspace{-0.7cm} \subfigure[Sensor graph signal]{\includegraphics[scale=0.17]{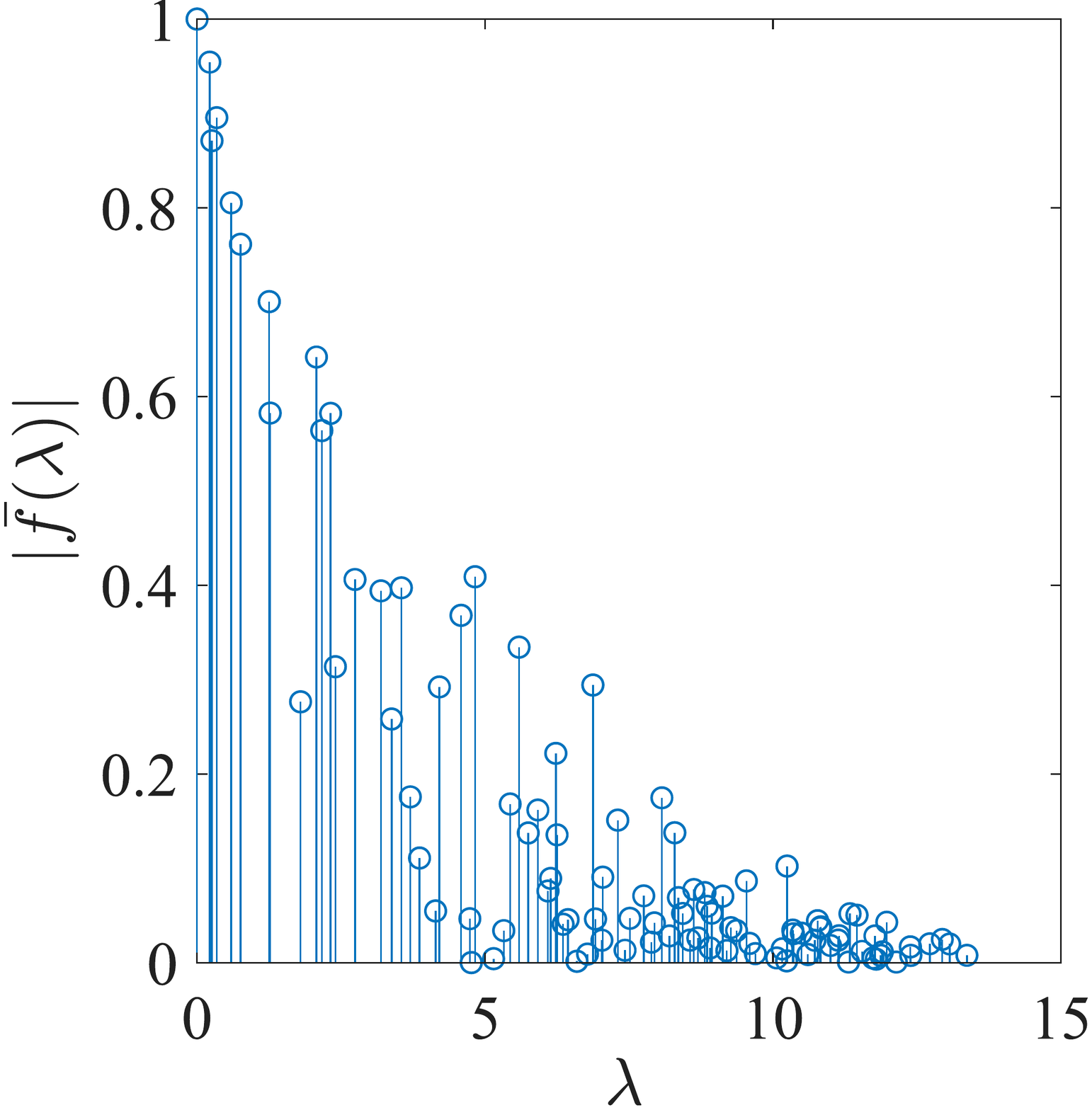}\label{fig4:sensor}}
    \hspace{-0.7cm}\subfigure[Community graph signal]{\includegraphics[scale=0.17]{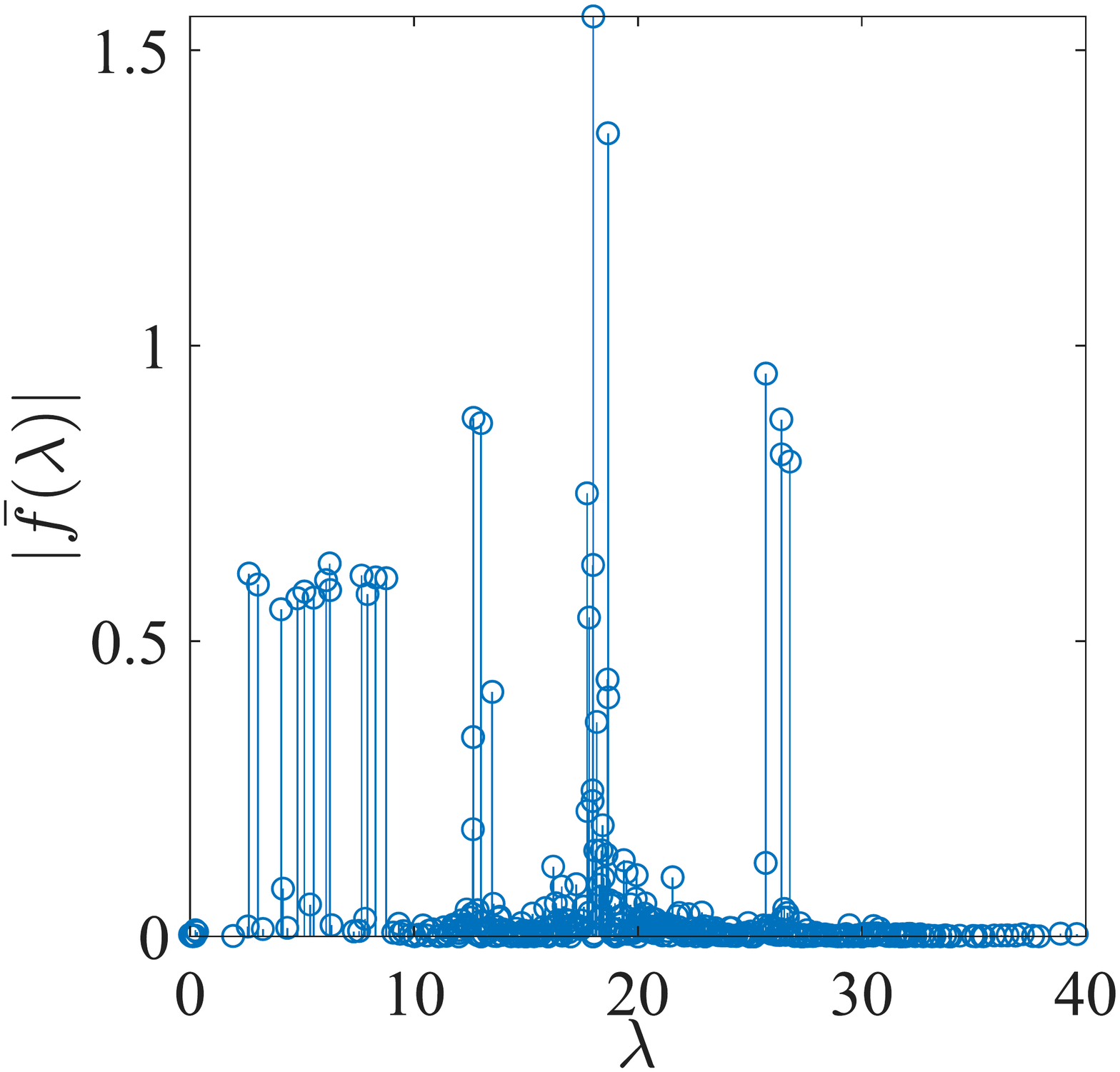}\label{fig4:community}}
  \caption{Graph signal in spectral domain.}
  \label{fig4}
\end{figure}
\section{Simulation Results}\label{Sec:Simulation}
In this section, we evaluate and compare the performance of our proposed SGFBSS structure with the existing GFBs in the literature, \cite{ortega1, Pavez, subgraph, shuman1, Tanaka3, Amir1, ortega2, GraphFC}. We have used GSPbox in MATLAB, \cite{GSPbox}, for graph generation, GSP operations, and visualizations. Similar to \cite{Tanaka3}, we consider two different graph signals with vertex and spectral representations that are shown in Figs.~\ref{fig3}~and~\ref{fig4}, respectively.
Figs.~\ref{fig4:sensor}~and~\ref{fig4:community} illustrate an approximately smooth signal and a localized signal in the spectral domain on the sensor and community graphs, respectively. As mentioned before, we have  freedom in choosing the analysis filters and cut-off frequencies. We assume ideal and Butterworth filters depicted in  Fig.~\ref{fig5} for the community graph as an example. Fig.~\ref{fig5:ideal} and  Fig.~\ref{fig5:butt} show the ideal filters with different cut-off frequencies and Butterworth filters with different orders for $\lambda_{\rm cut}=\lambda _{\frac{N}{2}-1}$, respectively. In the following, the performance of our proposed method in nonlinear approximation and denoising is evaluated.
\begin{figure}[t]
  \centering
  \subfigure[SGFBSS-I]{\includegraphics[scale=0.15]{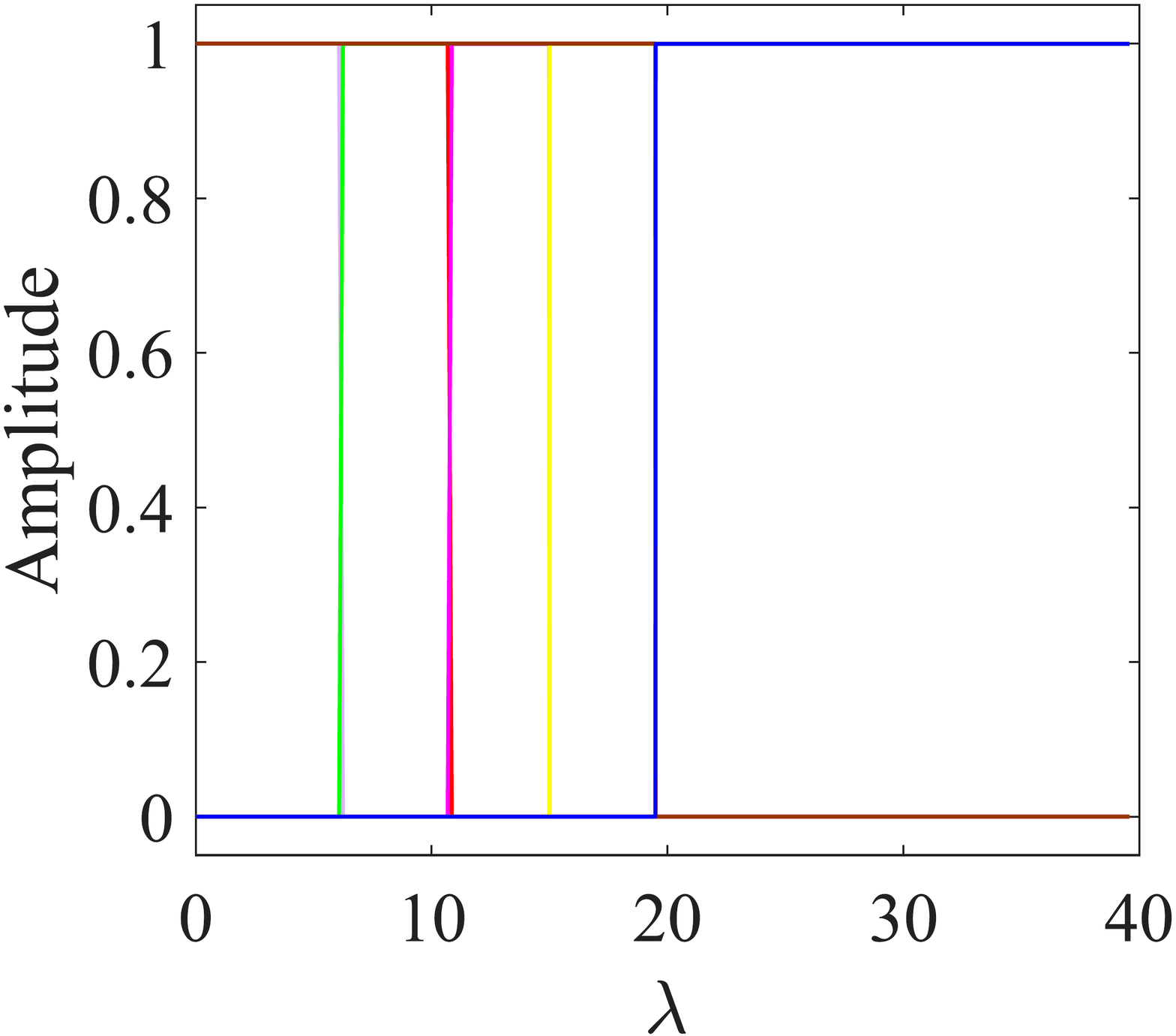}\label{fig5:ideal}}
\subfigure[SGFBSS-B]{\includegraphics[scale=0.15]{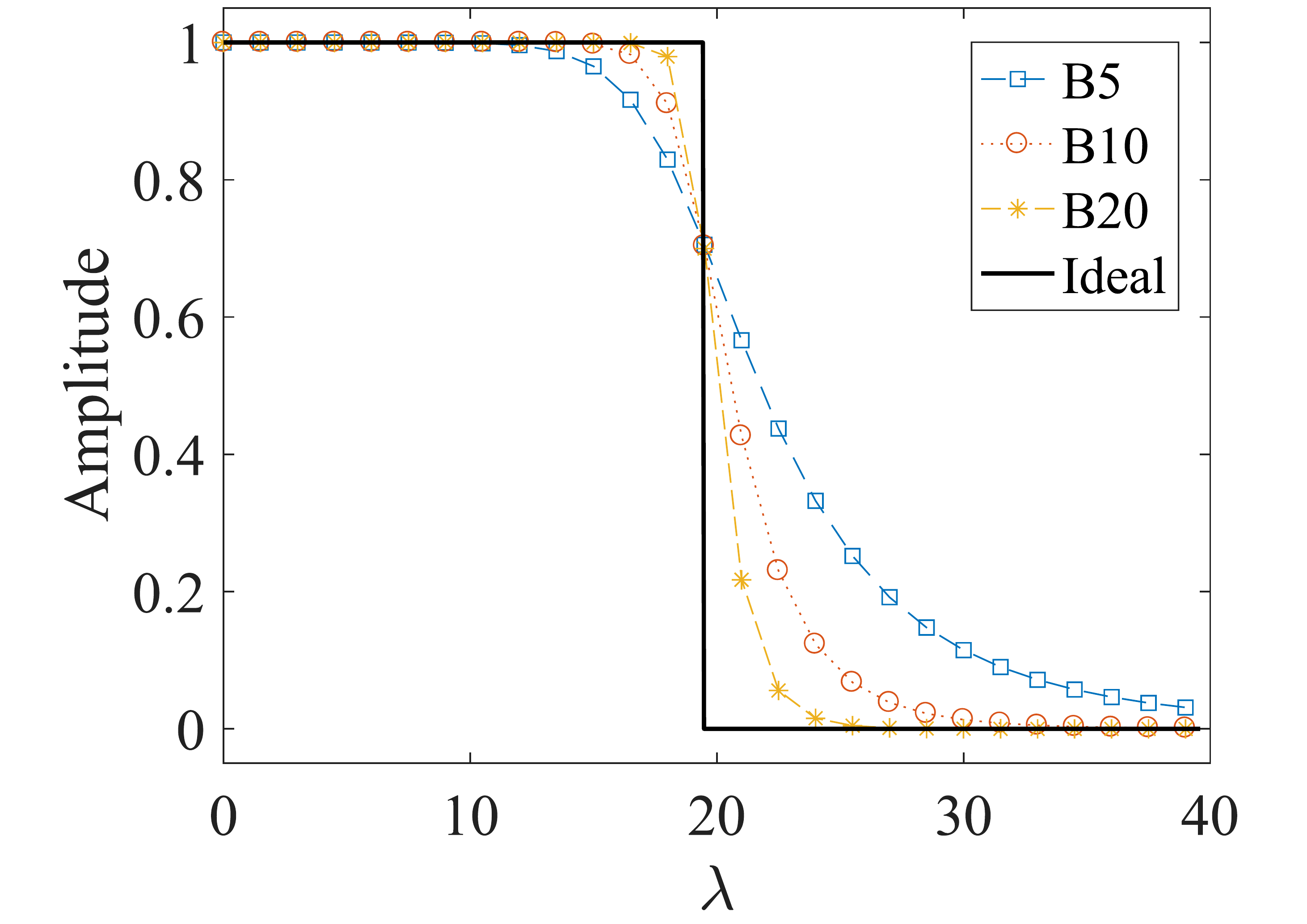}\label{fig5:butt}}
  \caption{Filter sets for SGFBSS (a) Ideal filters, (b)  Butterworth filters.}
  \label{fig5}
\end{figure}
\subsection{Nonlinear Approximation}
In nonlinear approximation, a fraction of the coefficients with high absolute values are kept and the rest are set to zero. In Fig.~\ref{fig6}, we compare our proposed SGFBSS structure with GraphQMF \cite{ortega1}, GraphBior \cite{ortega2}, GraphFC \cite{GraphFC}, MQGFB \cite{Pavez}, SubGFB \cite{subgraph}, CSFB \cite{shuman1}, GraphSS \cite{Tanaka3} and MSGFB \cite{Amir1}. This figure shows the resulting signal to noise ratio (SNR) versus the fraction of remaining coefficients for both sensor and community graphs. SGFBSS-I and SGFBSS-B20 represent SGFBSS with ideal and order 20 Butterworth filters, respectively. Our results in Fig.~\ref{fig6} show that spectral sampling based methods outperform the ones with vertex sampling that are shown with solid and dashed lines, respectively. In particular, our proposed SGFBSS structure achieves a significantly improved performance compared to its counterpart SGFB with vertex sampling, \cite{Amir1}. Furthermore, while having a superior performance to all the existing methods, our proposed structure leads to about the same performance as GraphSS, \cite{Tanaka3}.

\begin{figure}[t]
  \centering
  \includegraphics[scale=0.33]{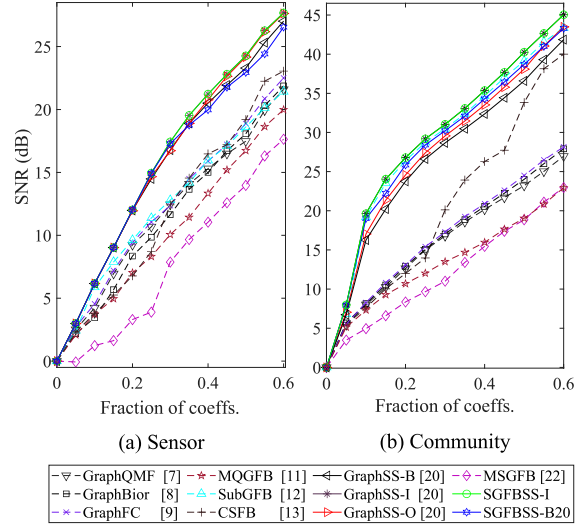}
%  \vspace{-0.5cm}
  \caption{Results of nonlinear approximation.}
%  \vspace{-0.3cm}
  \label{fig6}
\end{figure}

\begin{table}[t]
\caption{Denoising results. Average of 1000 runs.}
    \centering
    \begin{tabular}{l|c|c|c|c|c}         &\!\!\! Methods\!\!\!&\!\!\! \!\!$\sigma=1/8$ \!\!\! & \! \!\!\!\!$\sigma=1/4$ \!\!\! & \!\!\! \!\!$\sigma=1/2$ \!\!\! & \!\!\!\!\!\!\!  $\sigma=1$ \!\!\! \\ \hline\hline
        Sensor &\begin{tabular}{l}
              \!\!\!\!GraphSS-I \cite{Tanaka3}\!\!\!  \\
              \!\!\!\!GraphSS-O \cite{Tanaka3}\!\!\!  \\
              \!\!\!\!GraphSS-B \cite{Tanaka3}\!\!\!  \\
              \!\!\!\!CSFB \cite{shuman1}\!\!\! \\
              \!\!\!\!MQGFB \cite{Pavez}\!\!\! \\ \hdashline
              \!\!\!\!SGFBSS-I\!\!\!\\
              \!\!\!\!SGFBSS-B5\!\!\!\\
              \!\!\!\!SGFBSS-B10\!\!\!\\
              \!\!\!\!SGFBSS-B20\!\!\!\\
        \end{tabular} &\begin{tabular}{l}
             \!\!\!0.41  \\
            \!\!\!0.37  \\
            \!\!\!0.24  \\
            \!\!\!0.41  \\
           \!\!\!-2.90  \\\hdashline
            \!\!\!0.41  \\
            \!\!\!0.51  \\
            \!\!\!{\bf 0.52}  \\
            \!\!\!0.47  \\
        \end{tabular}&\begin{tabular}{l}
             \!\!\!1.28  \\
            \!\!\!1.28  \\
            \!\!\!1.20  \\
            \!\!\!1.28 \\
            \!\!\!-0.05 \\\hdashline
            \!\!\!1.28  \\
            \!\!\!{\bf 1.36}  \\
            \!\!\!1.31  \\
            \!\!\!1.33  \\
        \end{tabular}&\begin{tabular}{l}
             \!\!\!4.89  \\
            \!\!\!4.89  \\
            \!\!\!4.84  \\
            \!\!\!4.89 \\
            \!\!\!4.76 \\\hdashline
            \!\!\!4.89  \\
            \!\!\!{\bf 5.10}  \\
            \!\!\!4.96  \\
            \!\!\!4.94  \\
        \end{tabular}&\begin{tabular}{l}
            \!\!\!10.34  \\
            \!\!\!10.40  \\
            \!\!\!10.37  \\
            \!\!\!10.34 \\
            \!\!\!10.20 \\\hdashline
            \!\!\!10.34  \\
            \!\!\!{\bf 10.82}  \\
            \!\!\!10.69  \\
            \!\!\!10.55  \\
        \end{tabular}  \\
        \hline\hline
         \!\!\!\!Community  &\begin{tabular}{l}
              \!\!\!\!GraphSS-I \cite{Tanaka3}\!\!\!  \\
              \!\!\!\!GraphSS-O \cite{Tanaka3}\!\!\! \\
              \!\!\!\!GraphSS-B \cite{Tanaka3}\!\!\!  \\
             \!\!\!\!CSFB\cite{shuman1}\!\!\!  \\
             \!\!\!MQGFB \cite{Pavez}\!\!\! \\\hdashline
              \!\!\!\!SGFBSS-I\!\!\!\\
              \!\!\!\!SGFBSS-B5\!\!\!\\
              \!\!\!\!SGFBSS-B10\!\!\!\\
              \!\!\!\!SGFBSS-B20\!\!\!\\
        \end{tabular}&\begin{tabular}{l}
             \!\!\!{\bf 6.04}  \\
            \!\!\!5.39  \\
            \!\!\!5.07  \\
            \!\!\!{\bf 6.04} \\
            \!\!\!0.64 \\\hdashline
            \!\!\!{\bf 6.04}  \\
            \!\!\!3.78  \\
            \!\!\!5.28  \\
            \!\!\!5.94  \\
        \end{tabular}&\begin{tabular}{l}
             \!\!\!4.85  \\
            \!\!\!4.50  \\
            \!\!\!4.40  \\
            \!\!\!4.85 \\
            \!\!\!3.65 \\\hdashline
            \!\!\!4.85  \\
            \!\!\!4.06  \\
            \!\!\!4.72  \\
            \!\!\!{\bf 4.86}  \\
        \end{tabular}&\begin{tabular}{l}
            \!\!\!7.68  \\
            \!\!\!7.53  \\
            \!\!\!7.49  \\
            \!\!\!7.68  \\
            \!\!\!7.12 \\\hdashline
            \!\!\!7.68  \\
            \!\!\!{\bf 7.80}  \\
            \!\!\!7.76  \\
            \!\!\!7.79  \\
        \end{tabular}&\begin{tabular}{l}
             \!\!\!12.11  \\
            \!\!\!12.13  \\
            \!\!\!11.86  \\
            \!\!\!12.11 \\
            \!\!\!11.71 \\\hdashline
            \!\!\!12.11  \\
            \!\!\!{\bf 13.20}  \\
            \!\!\!12.72  \\
            \!\!\!12.38
        \end{tabular}
       \end{tabular}
    \label{tab:my_label}
%    \vspace{-0.4cm}
\end{table}

\subsection{Denoising}
In this subsection, we evaluate the performance of our proposed SGFBSS structure for noise suppression. We consider the noisy signal $\bbf_{\rm noisy}=\bbf+{\boldsymbol\xi}$ where ${\boldsymbol\xi}$ is the zero-mean white Gaussian noise vector with the standard deviation $\sigma$. We compare both SGFBSS-I and SGFBSS-B (with different orders $5$, $10$, and $20$) with MQGFB utilizing ``lazy" bi-orthogonal filters \cite{Pavez}, CSFB \cite{shuman1} and GraphSS \cite{Tanaka3}. We use non-normalized Laplacian matrix as the variation operator. Cut-off frequency is chosen as $\lambda_{\rm cut}=\lambda _{\frac{N}{2}-1}$ and the coefficients at low and high frequency channels are hard-thresholded with $T = 3\sigma$. Denoising results for the graph signals of Fig.~\ref{fig3} are shown in Tab.~\ref{tab:my_label} in terms of signal-to-noise ratio (SNR) improvement in dB, i.e.,  $ \Delta _{{\rm SNR}}  = 10\log _{10} \Big(\frac{{\left\|{\boldsymbol \xi}  \right\|_2^2 }}{{\left\| {{\bold{ f}}_{{\rm denoised}} {\rm  - \bold{ f}}} \right\|_2^2 }}\Big)$ \cite{LSGFF}. In this table, different noise levels are considered and the largest $\Delta _{{\rm SNR}} $ values are represented in bold. Our results show that in many cases, non-ideal filters outperform the ideal ones. For the sensor graph, our proposed method is superior to all the other methods for different noise levels. Similar results are achieved for the community graph while only for the lowest noise level, $\sigma=1/8$, the methods in \cite{Tanaka3} and \cite{shuman1} lead to the same performance as our proposed method SGFBSS-I. Finally, for all the other noise levels, our proposed SGFBSS-B provides the highest SNR improvement.

\section{Conclusion}\label{Sec:Conclusion}
In this paper, we introduced a two-channel critically-sampled SGFB structure based on the spectral sampling concept. This structure is applicable to any arbitrary undirected graph and can use both normalized and non-normalized graph Laplacian. We derived PR condition and discussed filter design aspects. Our proposed structure provides a large amount of flexibility in terms of shape and cut-off frequency of the filters. We derived a closed-form for the synthesis filter that led to a low complexity implementation of the synthesis section. Our simulation results demonstrate the superior nonlinear approximation and noise suppression performance of our proposed method compared to the existing GFB structures in the literature.

\IEEEpeerreviewmaketitle
\ifCLASSOPTIONcaptionsoff
  \newpage
\fi

\end{document}